\newtheorem{theorem}{Theorem}
\newtheorem{proposition}{Proposition}
\newtheorem{question}{Question}
\begin{document}

\title{Reality of the quantum state: Towards a stronger $\psi$-ontology theorem}

\author{Shane Mansfield}
\email[]{shane.mansfield@univ-paris-diderot.fr}
\affiliation{Institut de Recherche en Informatique Fondamentale, Universit\'e Paris Diderot - Paris 7}
\affiliation{Department of Computer Science, University of Oxford}


\begin{abstract}
The Pusey-Barrett-Rudolph no-go theorem provides an argument for the reality of the quantum state by ruling out $\psi$-epistemic ontological theories, in which the quantum state is of a statistical nature. It applies under an assumption of preparation independence, the validity of which has been subject to debate. We propose two plausible and less restrictive alternatives: a weaker notion allowing for classical correlations, and an even weaker, physically motivated notion of independence, which merely prohibits the possibility of super-luminal causal influences in the preparation process. The latter is a minimal requirement for enabling a reasonable treatment of subsystems in any theory. It is demonstrated by means of an explicit $\psi$-epistemic ontological model that the argument of PBR becomes invalid under the alternative notions of independence. As an intermediate step, we recover a result which is valid in the presence of classical correlations. Finally, we obtain a theorem which holds under the minimal requirement, approximating the result of PBR. For this, we consider experiments involving randomly sampled preparations and derive bounds on the degree of $\psi$ epistemicity that is consistent with the quantum-mechanical predictions. The approximation is exact in the limit as the sample space of preparations becomes infinite.
\end{abstract}

\maketitle


\section{Introduction}

A number of recent theorems have addressed the issue of the nature of the quantum state \cite{pusey:12,colbeck:12,colbeck:13,hardy:13a,montina:15}. If we suppose that each system has a certain real-world configuration or state of the matter, the description of which we will call its \emph{ontic state}, then we may pose the question of how a system's quantum state relates to its ontic state.

On the one hand, one might consider a pure quantum state as corresponding directly to, or being uniquely determined by, the ontic state, just as the state in classical mechanics (a point in classical phase space) is completely determined by the ontic state or real-world description of the system. On the other hand, the quantum state differs from a classical state of this kind in that in general it may only be used to make probabilistic predictions about the system. Therefore, one might consider that it merely represents our probabilistic, partial knowledge of the ontic state of a system, and moreover that a given ontic state may be compatible with distinct quantum states.

Rather convincing plausibility arguments can be made to support either of these views, which are referred to as \emph{$\psi$-ontic} and \emph{$\psi$-epistemic}, respectively \cite{harrigan:10}. The theorems go a step further and prove that under certain assumptions the $\psi$-epistemic view is untenable. We are especially concerned with the first of these results, the Pusey-Barrett-Rudolph (PBR) theorem \cite{pusey:12}, which has received the most attention and is considered by some to provide the most convincing case for the reality of the quantum state \cite{leifer:14}.

Most of the PBR assumptions are common to the familiar no-go theorems of Bell \cite{bell:64}, Kochen \& Specker \cite{kochen:75}, etc. These we refer to as ontological assumptions, which are briefly summarised in Sec.~\ref{sec:ass} \footnote{We also briefly outline a relationship between $\psi$-ontology and nonlocality/contextuality in the appendix.}. In addition to the common ontological assumptions, each no-go theorem postulates some form of independence for composite systems or observations; for example, for Bell this is the locality assumption, while for PBR it is the novel assumption of preparation independence. The validity of this new assumption has been called into question elsewhere \cite{hall:11,schlosshauer:14,emerson:13,wallden:13,gao:14,miller:14,ducuara:16}. In Sec.~\ref{sec:indass}, we give a precise definition of preparation independence and provide our own critique of its strength, building on earlier work by the author \cite{mansfield:14a,mansfield:13t}. In particular, it is pointed out that the assumption is too strong even to allow for classical correlations in multipartite scenarios.

To address the issue, we propose two plausible and less restrictive alternatives. The first allows for classical correlations mediated through a common past, while the second is a much weaker, physically motivated notion of independence, which, as we will explain, is a minimal requirement for enabling a reasonable treatment of subsystems in any theory. In Sec.~\ref{sec:pbr} we outline the PBR argument and provide a statement of the theorem; but in Sec.~\ref{sec:pbrbreaks} we see that the PBR argument is no longer valid under the weaker notions of independence, and demonstrate this by means of an explicit $\psi$-epistemic toy model. At first, this would appear to re-open the door to the possibility of plausible statistical or $\psi$-epistemic interpretations of the quantum state.

However, in Sec.~\ref{sec:thm} we recover two $\psi$-ontology results which hold under the weaker notions of independence. An intermediate step is to obtain a result which holds in the presence of classical correlations. Our main theorem holds under the minimal notion of independence and approximates the result of PBR. The analysis relies on a proposed experiment involving randomly sampled preparations. The proof makes use of a finite de~Finetti theorem \cite{christandl:09}, establishing a mathematical connection to $\psi$-ontology results. It also supposes that a certain symmetry present at the phenomenological level is reflected at the ontological level. The theorem places bounds on the degree of $\psi$ epistemicity that is consistent with the experimentally testable quantum-mechanical predictions. Our conclusion of approximate $\psi$ ontology improves monotonically as the size of the sample space of preparations increases and is exact in the limit as the sample space becomes infinite.



\section{Ontological Assumptions}
\label{sec:ass}



The first major assumption is that a system has an underlying physical state described by an element $\lambda$ of a measurable space $(\Lambda,\mathcal{L})$, which is referred to as the \emph{ontic state} of the system. This may or may not coincide with the quantum state. The space of ontic states is analogous to classical phase space. Furthermore, it is assumed that each pure quantum state $\psi$ induces a probability distribution $\mu_\psi$ on $(\Lambda,\mathcal{L})$, such that preparation of the system in the quantum state $\psi$ results in an ontic state sampled according to the distribution $\mu_\psi$. It is also assumed that the outcome of any measurement on the system depends solely on its ontic state: the probabilities of obtaining particular outcomes $o \in O$ are determined by \emph{measurement response functions} $\xi_o: \Lambda \rightarrow \mathbb{R}^+$ with the property that, for all $\lambda \in \Lambda$,
\begin{equation}
\sum_{o \in O} \xi_o (\lambda) = 1.
\end{equation}
Note that for the purposes of this article we need only consider measurements with finite outcome sets. Together, these assumptions encode the hypothesis that quantum systems should be described by some (possibly deeper or underlying) ontological theory.
We will be particularly interested in situations in which pure quantum state preparations are drawn from some finite set \footnote{To be technically precise, this set, equipped with its discrete $\sigma$-algebra, itself has the structure of a measurable space $(P,\Sigma)$. It will be convenient to consider probability distributions $\mu$ on $(\Lambda \times P, \mathcal{L} \times \Sigma)$ such that $\mu( L \mid p ) = \mu_p ( L )$ for all $L \in \mathcal{L}$ and $p \in P$. This will allow us to reason in a rigorous manner about ontic state distributions which are conditioned on the preparation setting.}.

The second major assumption is that, whatever this ontological theory might be, its predictions should agree with those of quantum mechanics. No-go theorems arise when it is found that predictions disagree for certain classes of ontological theories. Experimental tests may then be proposed to rule in favour of the predictions of quantum theory on the one hand or certain kinds of ontological theories on the other, though to our current knowledge all experimental evidence points to the correctness of quantum theory. So far, the assumptions posited are, in one form or another, common to the no-go theorems of Bell, Kochen \& Specker, etc.


\subsection{$\psi$-ontic vs $\psi$-epistemic ontologies}
We can already ask the following important question.
\begin{question}\label{q:1}
Is it possible for distinct pure quantum states $\psi,\phi$ to give rise to ontic state distributions $\mu_\psi, \mu_\phi$ which ``overlap'' (see Fig.~\ref{fig:ontdis})?
\end{question}

The question is significant because if the answer is no then we may argue that the quantum state is always uniquely defined by the ontic state, and is in this sense an aspect of physical reality. If, on the other hand, an overlap exists for some pair of distinct, pure quantum states, then there would be instances in which the ontic state or real-world description of the system would not uniquely determine its quantum state. This would provide justification for understanding the quantum state as being of an epistemic nature.

This allows us to distinguish between \emph{$\psi$-ontic} and \emph{$\psi$-epistemic} ontological theories, as those which do or do not admit epistemic overlaps, respectively. The distinction was first formalised in \cite{harrigan:10}, to which the reader is referred for a more detailed discussion.

Precisely, in the present discussion, we will quantify the \emph{epistemic overlap} between pure quantum states $\psi,\phi$ by 
\begin{equation}\label{eq:q}
\omega(\psi,\phi) \colonequals 1-D(\mu_\psi,\mu_\phi) ,
\end{equation}
where $D(\mu_\psi,\mu_\phi)$ is the trace distance between the distributions $\mu_\psi,\mu_\phi$. In instances in which it is clear which quantum states are being considered, we may simply denote this quantity by $\omega$. Suppose that we can prepare a system in one of the quantum states $\psi,\phi$; then $\omega(\psi,\phi)$ may be understood as the minimum probability of obtaining an ontic state which is compatible with both quantum states, and which thus \emph{witnesses} the epistemic nature of the quantum state with respect to the ontological theory in question.

\begin{figure}[htbp]
\caption{\label{fig:ontdis} The existence or non-existence of non-trivial epistemic overlaps for distributions induced by pairs of distinct, pure quantum states characterises (a) \emph{$\psi$-epistemic} and (b) \emph{$\psi$-ontic} ontological theories, respectively. The term epistemic overlap is made precise in Eq.~(\ref{eq:q}).}
\includegraphics[width=\columnwidth]{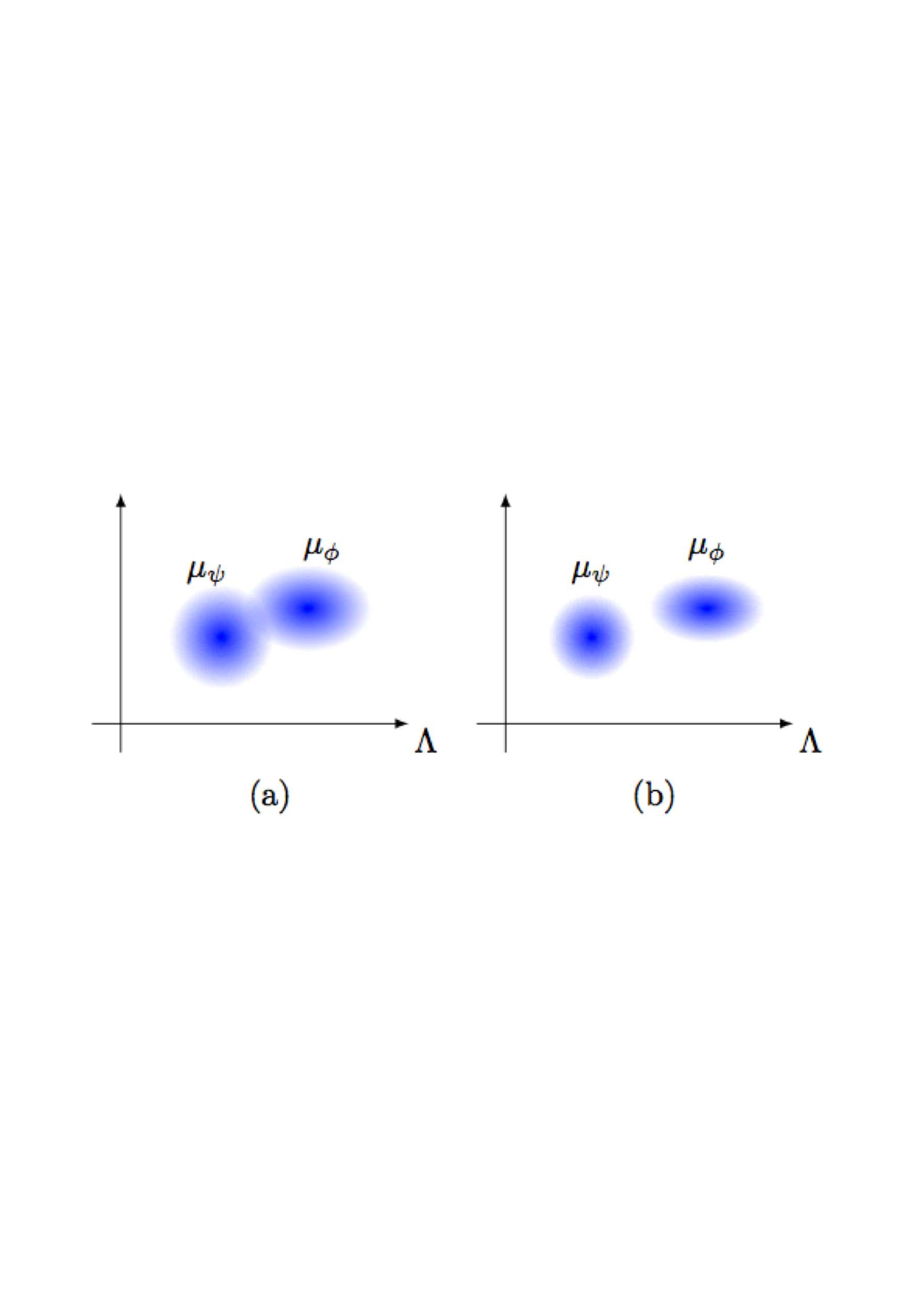}
\end{figure}

\section{Independence assumptions}
\label{sec:indass}

\subsection{Preparation independence}

In addition to the standard ontological assumptions of the previous section, the PBR theorem requires an assumption that systems which are prepared independently should have uncorrelated ontic states.
More precisely, \emph{preparation independence} asserts that the joint probability distribution induced by multipartite preparations is simply the product distribution. It will suffice to consider preparations drawn locally from some finite sets \footnote{These finite sets of pure quantum states or preparations together with their discrete $\sigma$-algebras form measurable spaces $(P_A,\Sigma_A), (P_B,\Sigma_B)$, etc.},
and to describe the property for bipartite systems [as in Fig.~\ref{fig:ind}(a)] consisting of ontic state spaces $(\Lambda_A,\mathcal{L}_A)$ and $(\Lambda_B,\mathcal{L}_B)$. In this setting the assumption can be expressed as
\begin{equation}\label{eq:preind}
\mu(L_A, L_B \mid p_A, p_B) = \mu(L_A \mid p_A) \, \mu(L_B \mid p_B),
\end{equation}
for all $L_A\in \mathcal{L}_A, L_B \in \mathcal{L}_B$ and $p_A,p_B$ pure quantum state preparations in the respective subsystems.
The assumption is perhaps best motivated by the fact that within quantum theory we may consider product quantum states of compound systems, which are completely uncorrelated, so it may therefore be expected that the same situation would hold at the ontological level.

\subsection{Classical correlations}
\label{subsec:compas}

One way in which preparation independence might reasonably be violated, however, is through \emph{classical correlations} arising in the ontic states, mediated through a common past (related points have been raised by Hall \cite{hall:11} and by Schlosshauer and Fine \cite{schlosshauer:14}). Suppose this mediator can take values in an auxiliary measure space $(\Lambda_c,\mathcal{L}_c)$. An effective form of preparation independence might then be recovered by conditioning on this common past [see Fig.~\ref{fig:ind}(b)], resulting in a kind of independence that bears a formal similarity to Bell locality; i.e.,
\begin{equation}\label{eq:compas}
\mu(L_A, L_B \mid p_A, p_B,L_{c}) =
\mu(L_A \mid p_A,L_{c}) \, \mu(L_B \mid p_B,L_{c}),
\end{equation}
for all $L_A\in \mathcal{L}_A, L_B \in \mathcal{L}_B, L_c \in \mathcal{L}_c$, and $p_A,p_B$ pure quantum state preparations in the respective subsystems.

\subsection{Subsystem condition}
\label{subsec:nopresig}

We go a step further and propose an even weaker notion of independence, which we call the \emph{subsystem condition}. The condition is that the ontic state distribution of each subsystem is uncorrelated with the preparation setting of the other(s) [see Fig.~\ref{fig:ind}(c)]; i.e.,
\begin{equation}\label{eq:nopresig}
\begin{aligned}
\mu(L_A \mid p_A, p_B) &= \mu(L_A \mid p_A), \\
\mu(L_B \mid p_A, p_B) &= \mu(L_B \mid p_B),
\end{aligned}
\end{equation}
for all $L_A\in \mathcal{L}_A, L_B \in \mathcal{L}_B$ and $p_A,p_B$ pure quantum state preparations in the respective subsystems.

Otherwise stated, it requires the existence of well-defined marginal probability distributions describing the local behaviour of each preparation device. Without this, it would not even be possible for the theory to describe the preparation of a state on a single subsystem in isolation without having to make reference to all other subsystems---simply put, one could not do local physics.
For instance, Question \ref{q:1} could only be posed for the case of pure states of the entire universe.
We therefore propose that the subsystem condition should be considered a \emph{minimal} notion of independence in preparation scenarios, required by any reasonable ontological theory.

This argument for requiring a minimal notion of independence recalls Einstein's comments on what it would entail to \emph{completely} abolish his ``principle of local action'' (which was conceived in the context of measurement rather than preparation scenarios, but by which we may suppose he had in mind the kind of independence up to classical correlations later formalised by Bell) \cite{einstein:48} \footnote{Translation by the author and Miriam Backens.}:
\begin{quote}
The following idea characterises the relative independence of objects ($A$ and $B$) far apart in space: external influence on $A$ has no \emph{immediate} influence on $B$; this is known as the ``Principle of Local Action'', which is used consistently only in field theory. If this axiom were to be completely abolished,  the existence of (quasi-)isolated systems, and thereby the establishment of laws which can be checked empirically in the accepted sense, would become impossible.
\end{quote}

Furthermore, the subsystem condition can be understood as ruling out, at the ontological level, the possibility of superluminal influences occurring in the preparation process. In this way, the condition is analogous to the assumption of \emph{parameter independence} for ontological models in measurement scenarios, which is borne out at the quantum-mechanical level by the no-signalling theorem \cite{ghirardi:80}. Both preparation independence [Eq.~(\ref{eq:preind})] and the notion of independence up to classical correlations [Eq.~(\ref{eq:compas})] imply the subsystem condition, but not vice versa.

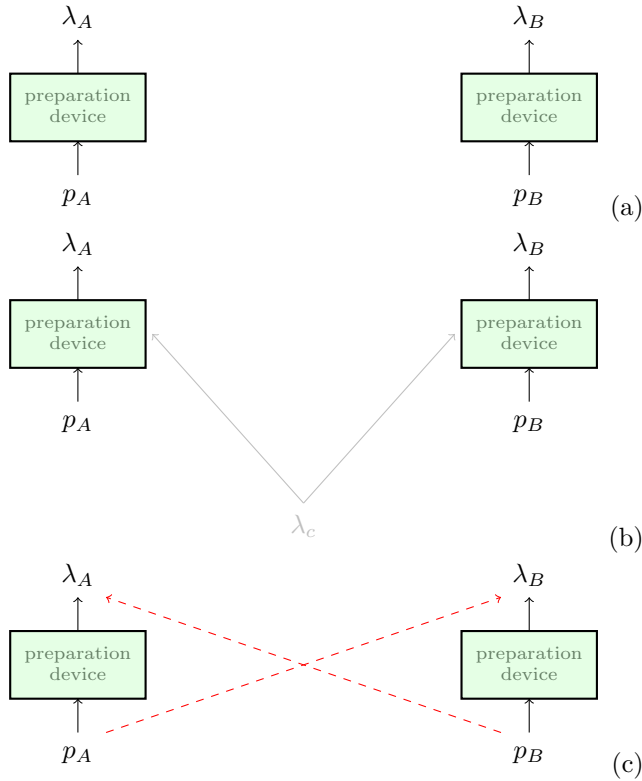
\begin{figure}[htbp]
\caption{\label{fig:ind} (a) Spatially separated, \emph{preparation-independent} devices. (b) A \emph{classical correlation} scenario in which correlations may be mediated through a common past. \label{fig:nopresig} (c) Dashed lines represent correlations forbidden by the \emph{subsystem condition}.}
\begin{center}
\begin{tabular}{cr}
\begin{tikzpicture}[scale=3]
  \path (-.05,0) coordinate (A1);
  \path (-.05,0.3) coordinate (A2);
  \path (0.55,0.3) coordinate (A3);
  \path (0.55,0) coordinate (A4);
  \path (0.25,-0.15) coordinate (B1);
  \path (0.25,0) coordinate (B2);
  \path (0.25,0.3) coordinate (B3);
  \path (0.25,0.45) coordinate (B4);
  \path (0.25,0.15) node {{\scriptsize \begin{tabular}{c} preparation \\ device \end{tabular}}};
  \path (0.25,-0.25) node {$p_A$};
  \path (0.25,0.55) node {$\lambda_A$};
  \begin{scope}[ thick,black]
    \filldraw[fill=green!20,fill opacity=0.5] (A1) -- (A2) -- (A3) -- (A4) -- cycle;
  \end{scope} 
  \draw[->] (B1) to (B2);  
  \draw[->] (B3) to (B4);
    
  \path (1.95,0) coordinate (a1);
  \path (1.95,0.3) coordinate (a2);
  \path (2.55,0.3) coordinate (a3);
  \path (2.55,0) coordinate (a4);
  \path (2.25,-0.15) coordinate (b1);
  \path (2.25,0) coordinate (b2);
  \path (2.25,0.3) coordinate (b3);
  \path (2.25,0.45) coordinate (b4);
  \path (2.25,0.15) node {{\scriptsize \begin{tabular}{c} preparation \\ device \end{tabular}}};
  \path (2.25,-0.25) node {$p_B$};
  \path (2.25,0.55) node {$\lambda_B$};
  \begin{scope}[ thick,black]
    \filldraw[fill=green!20,fill opacity=0.5] (a1) -- (a2) -- (a3) -- (a4) -- cycle;
  \end{scope} 
  \draw[->] (b1) to (b2);  
  \draw[->] (b3) to (b4);
  
    \path (0.375,-.15) coordinate (mans);
  \path (2.125,-.15) coordinate (mbns);
  \path (0.375,.45) coordinate (oans);
\path (2.125,.45) coordinate (obns);
  
\end{tikzpicture}
& (a)
\\
\begin{tikzpicture}[scale=3]
  \path (-.05,0) coordinate (A1);
  \path (-.05,0.3) coordinate (A2);
  \path (0.55,0.3) coordinate (A3);
  \path (0.55,0) coordinate (A4);
  \path (0.25,-0.15) coordinate (B1);
  \path (0.25,0) coordinate (B2);
  \path (0.25,0.3) coordinate (B3);
  \path (0.25,0.45) coordinate (B4);
  \path (0.25,0.15) node {{\scriptsize \begin{tabular}{c} preparation \\ device \end{tabular}}};
  \path (0.25,-0.25) node {$p_A$};
  \path (0.25,0.55) node {$\lambda_A$};
  \begin{scope}[ thick,black]
    \filldraw[fill=green!20,fill opacity=0.5] (A1) -- (A2) -- (A3) -- (A4) -- cycle;
  \end{scope} 
  \draw[->] (B1) to (B2);  
  \draw[->] (B3) to (B4);
    
  \path (1.95,0) coordinate (a1);
  \path (1.95,0.3) coordinate (a2);
  \path (2.55,0.3) coordinate (a3);
  \path (2.55,0) coordinate (a4);
  \path (2.25,-0.15) coordinate (b1);
  \path (2.25,0) coordinate (b2);
  \path (2.25,0.3) coordinate (b3);
  \path (2.25,0.45) coordinate (b4);
  \path (2.25,0.15) node {{\scriptsize \begin{tabular}{c} preparation \\ device \end{tabular}}};
  \path (2.25,-0.25) node {$p_B$};
  \path (2.25,0.55) node {$\lambda_B$};
  \begin{scope}[ thick,black]
    \filldraw[fill=green!20,fill opacity=0.5] (a1) -- (a2) -- (a3) -- (a4) -- cycle;
  \end{scope} 
  \draw[->] (b1) to (b2);  
  \draw[->] (b3) to (b4);
  
    \path (0.375,-.15) coordinate (mans);
  \path (2.125,-.15) coordinate (mbns);
  \path (0.375,.45) coordinate (oans);
\path (2.125,.45) coordinate (obns);

  \path (1.25,-.6) coordinate (p);
  \path (0.58,0.15) coordinate (ma);
  \path (1.92,0.15) coordinate (mb);
  
  \begin{scope}[gray!50]
  \draw[->] (p) to (ma);
  \draw[->] (p) to (mb);
  
  \path (1.25,-.7) node {$\lambda_{c}$};
  \end{scope}
  
\end{tikzpicture}
& (b)
\\
\begin{tikzpicture}[scale=3]
  \path (-.05,0) coordinate (A1);
  \path (-.05,0.3) coordinate (A2);
  \path (0.55,0.3) coordinate (A3);
  \path (0.55,0) coordinate (A4);
  \path (0.25,-0.15) coordinate (B1);
  \path (0.25,0) coordinate (B2);
  \path (0.25,0.3) coordinate (B3);
  \path (0.25,0.45) coordinate (B4);
  \path (0.25,0.15) node {{\scriptsize \begin{tabular}{c} preparation \\ device \end{tabular}}};
  \path (0.25,-0.25) node {$p_A$};
  \path (0.25,0.55) node {$\lambda_A$};
  \begin{scope}[ thick,black]
    \filldraw[fill=green!20,fill opacity=0.5] (A1) -- (A2) -- (A3) -- (A4) -- cycle;
  \end{scope} 
  \draw[->] (B1) to (B2);  
  \draw[->] (B3) to (B4);
    
  \path (1.95,0) coordinate (a1);
  \path (1.95,0.3) coordinate (a2);
  \path (2.55,0.3) coordinate (a3);
  \path (2.55,0) coordinate (a4);
  \path (2.25,-0.15) coordinate (b1);
  \path (2.25,0) coordinate (b2);
  \path (2.25,0.3) coordinate (b3);
  \path (2.25,0.45) coordinate (b4);
  \path (2.25,0.15) node {{\scriptsize \begin{tabular}{c} preparation \\ device \end{tabular}}};
  \path (2.25,-0.25) node {$p_B$};
  \path (2.25,0.55) node {$\lambda_B$};
  \begin{scope}[ thick,black]
    \filldraw[fill=green!20,fill opacity=0.5] (a1) -- (a2) -- (a3) -- (a4) -- cycle;
  \end{scope} 
  \draw[->] (b1) to (b2);  
  \draw[->] (b3) to (b4);
  
    \path (0.375,-.15) coordinate (mans);
  \path (2.125,-.15) coordinate (mbns);
  \path (0.375,.45) coordinate (oans);
\path (2.125,.45) coordinate (obns);

  \draw[->,red,dashed] (mans) to (obns);
  \draw[->,red,dashed] (mbns) to (oans);
  
\end{tikzpicture}
& (c)
\end{tabular}
\end{center}
\end{figure}

\subsection{Other notions of independence and some remarks}

Similarly, in the language of Colbeck and Renner \cite{colbeck:12}, the subsystem condition would be implied by preparation settings being \emph{free} with respect to the causal structure of Fig.~\ref{fig:ind}(a).

Another notion of independence has been proposed by Emerson, Serbin, Sutherland and Veitch \cite{emerson:13}. It is similar to the notion of independence up to classical correlations described in Sec.~\ref{subsec:compas}, but in their case $\lambda_c$ would mediate between the systems without the constraint (\ref{eq:compas}) that justifies considering it as a mediator of classical correlations via a common past. Instead it is required that an effective kind of preparation independence as in Eq.~(\ref{eq:preind}) is obtained if one marginalises to ``forget'' the auxiliary space $\Lambda_c$. Problematically, however, correlations are therefore allowed between $p_A$, $p_B$ and $\lambda_c$ and the subsystem condition is in general violated.
Other works have attempted to drop independence assumptions entirely, with some success, though they obtain correspondingly weaker results that only rule out strong forms of $\psi$ epistemicity \cite{maroney:12,aaronson:13,barrett:14,branciard:14,leifer:14b,knee:16}.

It is implicit in (\ref{eq:preind}) and (\ref{eq:nopresig}) that the set of ontic states in a composite system is simply the Cartesian product of the sets of ontic states in the component subsystems (Leifer refers to this as the Cartesian product assumption \cite{leifer:14}); where they differ is in terms of the constraints placed on global correlations. The situation is somewhat different in classical correlation scenarios; we will return to this issue in Sec.~\ref{sec:thm}. Note also that each of the definitions (\ref{eq:preind})--(\ref{eq:nopresig}) generalises beyond the bipartite scenario in the obvious way.

\section{The PBR Argument}\label{sec:pbr}

The PBR theorem \cite{pusey:12} demonstrates a contradiction between the predictions of quantum theory, on the one hand, and of $\psi$-epistemic, preparation-independent ontological theories on the other. In this section, we briefly outline the argument \footnote{See also \cite{moseley:13,leifer:14} for slightly different versions.}.
To obtain the contradiction, it is supposed that some pair of distinct pure quantum states $\psi,\phi$ give rise to an epistemic overlap \footnote{Since the pair of states will always span a two-dimensional subspace, we may restrict our attention to qubits without loss of generality.}.

\paragraph{$\{\psi,\phi\}$ preparation devices.} Consider a preparation device that is capable of preparing a system in either of these two quantum states. Regardless of which of the preparations is actually made, there is probability at least $\omega >0$ of this device generating an ontic state which witnesses the epistemic overlap, where $\omega$ is defined as in Eq.~(\ref{eq:q}).

\paragraph{Joint systems.} We now consider a second such preparation device, spacelike separated from, and to be operated in parallel with, the first [as in~Fig.~\ref{fig:ind}(a)]. The assumption of preparation independence is invoked to deduce that, regardless of which individual preparations are made, there is probability at least $\omega^2 >0$ of \emph{both} devices generating ontic states which witness the epistemic overlap, and, moreover, that in such a case it should be impossible to make any distinction about which of the joint preparations $$\mathbf{p} \in \{(\psi,\psi),(\psi,\phi),(\phi,\psi),(\psi,\psi)\}$$ has been made.
More generally, we may wish to consider a joint system consisting of $m$ subsystems, each equipped with a $\{\psi,\phi\}$ preparation device, for which this deduction generalises in the obvious way.

\paragraph{Contradiction.} The contradiction arises because quantum theory does, in fact, allow such distinctions to be made. In particular, for all pairs of qubit states, quantum theory admits \emph{conclusive exclusion measurements} \cite{bandyopadhyay:14} on joint systems of the kind described above. In the simplest case, in which only two subsystems are considered, a conclusive exclusion measurement would give outcomes in the set
\begin{equation*}
\{\lnot (\psi,\psi), \lnot (\psi, \phi), \lnot (\phi,\psi), \lnot (\phi, \phi)\},
\end{equation*}
with the property that outcome $\lnot(\psi,\psi)$ has probability zero whenever the joint preparation $(\psi,\psi)$ has been made, so that its occurrence precludes the possibility of the joint preparation having been $(\psi,\psi)$, and so on \footnote{This defining property of conclusive exclusion measurements generalises to $m$-partite systems of this kind in the obvious way. In general, for states with angular separation $\theta$, a quantum-mechanical conclusive exclusion measurement requires at least {$m = {\left\lceil \log_2 \left( \tan \frac{\theta}{2} +1 \right) \right\rceil} ^{-1}$} subsystems \cite{pusey:12,bandyopadhyay:14}.}.

\begin{theorem}\label{thm:pbr}
(PBR \cite{pusey:12}). Suppose there exists a conclusive exclusion measurement for a joint system in which each subsystem is equipped with a $\{\psi,\phi\}$ preparation device. In any preparation-independent ontological theory which can describe this experiment, $\psi$ and $\phi$ have zero epistemic overlap.
\end{theorem}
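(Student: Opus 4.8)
The plan is to derive a contradiction from the assumption that the epistemic overlap $\omega(\psi,\phi)$ is strictly positive, using the existence of the conclusive exclusion measurement together with preparation independence. First I would fix the joint preparation $(\psi,\psi)$ and unpack what preparation independence (\ref{eq:preind}) buys us at the ontological level: the joint ontic state distribution factorises as $\mu_\psi \times \mu_\psi$ on $\Lambda_A \times \Lambda_B$. Since each single-system preparation device generates, with probability at least $\omega$, an ontic state lying in the region of overlap between the supports of $\mu_\psi$ and $\mu_\phi$, the product structure guarantees that with probability at least $\omega^2 > 0$ \emph{both} subsystems land in their respective overlap regions simultaneously. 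The crucial feature of such a jointly-overlapping ontic state $(\lambda_A,\lambda_B)$ is that it is in the support of every one of the four product distributions $\mu_{\mathbf p}$ for $\mathbf p \in \{(\psi,\psi),(\psi,\phi),(\phi,\psi),(\phi,\phi)\}$, precisely because each coordinate distribution was chosen to witness the overlap.

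Next I would bring in the response functions. The conclusive exclusion measurement has an outcome $\lnot(\psi,\psi)$ whose quantum probability is zero whenever the joint state $(\psi,\psi)$ is prepared. By the ontological assumption that measurement statistics are recovered by integrating the response function against the ontic distribution, this forces
\begin{equation*}
\int_{\Lambda_A \times \Lambda_B} \xi_{\lnot(\psi,\psi)}(\lambda_A,\lambda_B)\, d(\mu_\psi \times \mu_\psi) = 0 .
\end{equation*}
Because the integrand $\xi_{\lnot(\psi,\psi)}$ is nonnegative, this in turn forces $\xi_{\lnot(\psi,\psi)}$ to vanish $\mu_\psi\times\mu_\psi$-almost everywhere, and in particular almost everywhere on the positive-measure jointly-overlapping region identified above. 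Running the identical argument for each of the other three joint preparations and their corresponding exclusion outcomes, I obtain that all four response functions $\xi_{\lnot\mathbf p}$ vanish almost everywhere on that same overlap region, since it lies in the support of each $\mu_{\mathbf p}$.

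The contradiction then follows from the normalisation condition $\sum_o \xi_o(\lambda) = 1$: at an ontic state $(\lambda_A,\lambda_B)$ in the jointly-overlapping region — of which there is a positive-measure set — all four outcome response functions are zero, so their sum cannot equal $1$, violating the requirement that the measurement response functions sum to unity at every $\lambda$. Hence no conclusive exclusion measurement can be reproduced by a preparation-independent ontological theory unless the overlap region has measure zero, i.e.\ unless $\omega(\psi,\phi) = 0$.

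I expect the main obstacle to be handling the measure-theoretic bookkeeping cleanly rather than any conceptual difficulty: one must be careful that ``witnessing the overlap'' is phrased in terms of sets of positive measure under the relevant distributions (so that the product set genuinely has positive $\mu_\psi\times\mu_\psi$-measure), and that the almost-everywhere vanishing conclusions from the four separate integrals can be intersected on a common positive-measure set. Relating the trace-distance definition of $\omega$ in (\ref{eq:q}) to an explicit positive-measure common-support region, and verifying that this region survives simultaneously for all four product measures, is where the argument needs the most care; the normalisation contradiction itself is then immediate.
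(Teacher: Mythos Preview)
Your proposal is correct and follows essentially the same approach as the paper's argument in Section~\ref{sec:pbr}: assume $\omega>0$, use preparation independence to obtain a positive-measure region of jointly-overlapping ontic states indistinguishable across all four preparations, then observe that the conclusive exclusion measurement forces every response function to vanish there, contradicting normalisation. You have simply spelled out the measure-theoretic details (the integral vanishing forcing almost-everywhere vanishing, and the intersection of the four almost-everywhere statements) that the paper leaves implicit in its informal outline.
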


In other words, epistemic overlaps cannot be reconciled with conclusive exclusion measurements if we accept the assumption of preparation independence. Nevertheless, within quantum theory, such measurements can be specified for all pairs of quantum states. So it follows that any preparation-independent ontological theory which is consistent with quantum theory is necessarily $\psi$ ontic.

\section{When PBR No Longer Applies}\label{sec:pbrbreaks}

In the absence of the assumption of preparation independence, however, it is still possible find $\psi$-epistemic models for the prediction of conclusive exclusion measurements. This is because the reasoning about joint systems from Sec.~\ref{sec:pbr} is no longer valid.

Consider, for example, a pair of $\{ \psi, \phi \}$ preparation devices comprising a joint system which behaves according to Table \ref{tab:couexa}. Restricting our attention to the behaviour of either device in isolation, by considering the corresponding marginal probabilities,
it is clear that there is probability at least $\omega>0$ for the individual device to produce an ontic state witnessing an epistemic overlap of $\psi$ and $\phi$, regardless of which preparation was made. Nevertheless, when the system is considered as a whole, we find that there is zero probability that both ontic states lie in the overlap region at once.

\begin{table}
\caption{\label{tab:couexa} Let $\Delta$ denote the measurable set of ontic states which witness an epistemic overlap of $\psi$ and $\phi$ (the \emph{overlap region}), let $\Delta'$ be its complement, and choose any $\omega_1,\omega_2>0$ such that $\omega_1+\omega_2\leq1$. For each possible pair of preparations $(p_A,p_B)$, the table specifies the probabilities that the resultant ontic states $(\lambda_A,\lambda_B)$ lie within the specified regions.}
\begin{center}
\begin{tabular}{cc||cccc}
$p_A$ & $p_B$ & $(\Delta,\Delta)$ & $(\Delta,\Delta')$ & $(\Delta',\Delta)$ & $(\Delta',\Delta')$ \\ \hline \hline
$\psi$ & $\psi$ & $0$ & $\omega_1$ & $\omega_1$ & $1-2\omega_1$ \\
$\psi$ & $\phi$ & $0$ & $\omega_1$ & $\omega_2$ & $1-\omega_1-\omega_2$ \\
$\phi$ & $\psi$ & $0$ & $\omega_2$ & $\omega_1$ & $1-\omega_1 - \omega_2$ \\
$\phi$ & $\phi$ & $0$ & $\omega_2$ & $\omega_2$ & $1-2\omega_2$ \\
\end{tabular}
\end{center}
\end{table}

The behaviour described in Table \ref{tab:couexa} does not satisfy preparation independence. However, it can easily be checked that it does satisfy the subsystem condition (\ref{eq:nopresig}), and indeed it can even be shown to exhibit independence up to classical correlations (\ref{eq:compas}). Moreover, based on this behaviour, it is not difficult to contrive a $\psi$-epistemic ontological model which reproduces precisely the quantum-mechanical predictions of the conclusive exclusion experiment proposed by PBR, which will thus satisfy both (\ref{eq:compas}) and (\ref{eq:nopresig}).

\subsection{$\psi$-epistemic Model for PBR} If we set $\Lambda := \{ \lambda_1,\lambda_2,\lambda_3 \}$, let $\mu_\psi$ have support $\{ \lambda_1,\lambda_3 \}$, and let $\mu_\phi$ have support $\{ \lambda_2, \lambda_3 \}$ (i.e., $\Delta = \{ \lambda_3 \}$), then Table \ref{tab:couexa} fully determines the behaviour of the preparation devices. We choose the following measurement response functions.
\begin{equation}
\begin{aligned}
\xi_{\lnot (\psi,\psi)}(\bm{\lambda}) &:= \begin{cases} 1 & \mbox{if } \bm{\lambda} = (\lambda_{2},\lambda_{2}) \\ \sfrac{1}{2} & \mbox{if } \bm{\lambda} \in \{(\lambda_{3},\lambda_{2}),(\lambda_{2},\lambda_{3})\} \\ 0 & \mbox{otherwise} \end{cases} \\
\xi_{\lnot (\psi, \phi)}(\bm{\lambda}) &:= \begin{cases} 1 & \mbox{if } \bm{\lambda} = (\lambda_{2},\lambda_{1}) \\ \sfrac{1}{2} & \mbox{if } \bm{\lambda} \in \{(\lambda_{3},\lambda_{1}),(\lambda_{2},\lambda_{3})\} \\ 0 & \mbox{otherwise} \end{cases} \\
\xi_{\lnot (\phi,\psi)}(\bm{\lambda}) &:= \begin{cases} 1 & \mbox{if } \bm{\lambda} = (\lambda_{1},\lambda_{2}) \\ \sfrac{1}{2} & \mbox{if } \bm{\lambda} \in \{(\lambda_{1},\lambda_{3}),(\lambda_{3},\lambda_{2})\} \\ 0 & \mbox{otherwise} \end{cases} \\
\xi_{\lnot (\phi, \phi)}(\bm{\lambda}) &:= \begin{cases} 1 & \mbox{if } \bm{\lambda} = (\lambda_{1},\lambda_{1}) \\ \sfrac{1}{2} & \mbox{if } \bm{\lambda} \in \{(\lambda_{1},\lambda_{3}),(\lambda_{3},\lambda_{1})\} \\ 0 & \mbox{otherwise} \end{cases}
\end{aligned}
\end{equation}
The quantum-mechanical predictions for the PBR experiment \cite{pusey:12} (Table \ref{tab:pbr}) are then reproduced by the operational probabilities
\begin{equation}
p(o \mid p_A, p_B) := \sum_{\lambda_A,\lambda_B \in \Lambda} \, \xi_o (\lambda_A,\lambda_B) \, \mu(\lambda_A,\lambda_B \mid p_A, p_B),
\end{equation}
when
$\omega_1 = \omega_2 = \sfrac{1}{2}$ \footnote{In fact, the \emph{possibilistic} information --- i.e., the supports of the probability distributions --- contained in the empirical model in Table \ref{tab:pbr} is constant for all $\omega_1,\omega_2$. Moreover, this possibilistic information is sufficient for Theorem \ref{thm:pbr} to apply. This is somewhat analogous to the notion of possibilistic or logical nonlocality/contextuality (e.g.~\cite{abramsky:11,mansfield:11}).}.

\begin{table}
\caption{\label{tab:pbr} Probability table (or \emph{empirical model} to use the terminology introduced in \cite{abramsky:11}) representing the quantum-mechanical predictions for the probabilities of observing each of the four outcomes (columns) to the PBR conclusive exclusion measurement, for each possible joint preparation (rows).}
\begin{center}
\begin{tabular}{cc||cccc}
$p_A$ & $p_B$ & $\lnot (\psi,\psi)$ & $\lnot (\psi,\phi)$ & $\lnot (\phi,\psi)$ & $\lnot (\phi,\phi)$ \\ \hline \hline
$\psi$ & $\psi$ & $0$ & $\sfrac{1}{4}$ & $\sfrac{1}{4}$ & $\sfrac{1}{2}$ \\
$\psi$ & $\phi$ & $\sfrac{1}{4}$ & $0$ & $\sfrac{1}{2}$ & $\sfrac{1}{4}$ \\
$\phi$ & $\psi$ & $\sfrac{1}{4}$ & $\sfrac{1}{2}$ & $0$ & $\sfrac{1}{4}$ \\
$\phi$ & $\phi$ & $\sfrac{1}{2}$ & $\sfrac{1}{4}$ & $\sfrac{1}{4}$ & $0$ \\
\end{tabular}
\end{center}
\end{table}

\section{Towards a Stronger Theorem}
\label{sec:thm}

In this section, we establish $\psi$-ontology results which hold under the less restrictive notions of independence introduced in Sec.~\ref{sec:indass}, beginning with the case in which preparations are only assumed to be independent up to classical correlations (\ref{eq:compas}).

First, however, some care is required in order to properly treat classical correlation scenarios [Fig.~\ref{fig:ind}(b)]. What previously constituted a full ontological description $(\lambda_A,\lambda_B) \in \Lambda \times \Lambda$ of the combined system has, in a classical correlation scenario, been supplemented with an additional element $\lambda_c \in \Lambda_c$, whose status it is necessary to address before proceeding. There are two consistent approaches available to us.

\begin{enumerate}
\item\label{pastontic}
We may suppose that $\lambda_c$ is an integral part of the ontic description, and accordingly that the pair $(\lambda_i, \lambda_c)$ --- rather than just $\lambda_i$ --- is statistically sufficient for determining the outcome probabilities for any measurement on the subsystem labelled by $i$.
\item\label{pastepi}
Alternatively, we may suppose that $\lambda_c$ merely serves as a classical mediator between the $\lambda_i$s, and that $\lambda_i$ is still statistically sufficient for determining the outcome probabilities for any measurement on the subsystem labelled by $i$.
\end{enumerate}

Both of these approaches are deserving of consideration; however, for the remainder of this article, we will commit to the \emph{integral-past} approach \ref{pastontic}, and this is the context in which the results in this Sec.~will hold. In this case, the issue of nature of the quantum state no longer comes down the (non-)existence of overlaps for the distributions $\mu_\psi, \mu_\phi$ on $(\Lambda, \mathcal{L})$, but rather for distributions $\mu'_\psi, \mu'_\phi$ on $(\Lambda \times \Lambda_c, \mathcal{L} \times \mathcal{L}_c)$.


\begin{proposition}\label{prop:compas}
\footnote{A version of this proposition was proved by Adam Brandenburger in a private correspondence with the author.}. Suppose there exists a conclusive exclusion measurement for a joint system in which each subsystem is equipped with a $\{\psi,\phi\}$ preparation device. For any ontological theory which can describe this experiment and which is preparation-independent up to integral-past classical correlations, the ontic state distributions $\mu_\psi$ and $\mu_\phi$ necessarily have non-overlapping supports.
\end{proposition}

\begin{proof}

Suppose that $\mu'_\psi(L') > 0$ for some $L' = (L,L_c) \in \mathcal{L}'$. By (\ref{eq:compas}), we know that conditioned on $L_c$ we have factorisability of the joint probability distributions. With $L_c$ fixed, the proof of Theorem \ref{thm:pbr} may therefore be adapted to apply to the probability distributions $\mu'( - \mid \psi, L_c)$ and $\mu'( - \mid \phi, L_c)$, which as a result must have non-overlapping supports. Now, if $\mu'_\psi(L') = \mu'( L \mid \psi, L_c)> 0$ then it must hold that $\mu'_\phi(L') = \mu'( L \mid \phi, L_c) = 0$. Thus $\mu'_\psi$ and $\mu'_\phi$ have non-overlapping supports.
\end{proof}

We now turn to proving a $\psi$-ontology theorem under the minimal notion of independence, the subsystem condition \eqref{eq:nopresig}. The analysis refers to an experiment similar to the one previously considered, but with an important modification.
In the simplest case, the PBR experiment involved a conclusive exclusion measurement on $m=2$ subsystems. Recall, however, that in general $m$ depends on the angular separation of $\psi$ and $\phi$. We consider an experiment in which a conclusive exclusion measurement is performed on $m$ subsystems equipped with $\{\psi,\phi\}$ preparation devices, with the crucial caveat that these subsystems are uniformly sampled from a larger ensemble of  $n \gg m$ subsystems. Note that the quantum-mechanical predictions are the same regardless of which $m$ subsystems are taken.

While we will not be assuming either preparation independence or independence up to classical correlations, we will require an additional \emph{symmetry assumption}: that the joint behaviour of the $\{\psi,\phi\}$ preparation devices is invariant under permutations of those devices. The quantum-mechanical description is certainly invariant under permutations, and so the symmetry assumption merely supposes that this is also reflected at the ontological level.
More precisely, the joint behaviour of the preparation devices will be described in an ontological theory by some $n$-partite conditional probability distribution $\sigma$. A permutation $\pi \in S^n$ acts on $\sigma$ as follows:
\begin{equation}
\begin{aligned}
\pi \cdot & \sigma ( L_1, \dots , L_n \mid p_1, \dots , p_n ) = \\ & \sigma (L_{\pi^{-1}(1)}, \dots , L_{\pi^{-1}(n)} \mid p_{\pi^{-1}(1)} , \dots , p_{\pi^{-1}(n)}).
\end{aligned}
\end{equation}
The distribution $\sigma$ is said to be symmetric if it is invariant under all permutations $\pi \in S^n$.

This kind of shift in focus from independence to symmetry (or exchangeability) is standard in Bayesian statistics. As well as appearing to be a rather natural assumption in the present context, note that both preparation independence and independence up to classical correlations imply symmetry, but not vice versa. In other words, symmetry is a strictly weaker assumption than either assumption of independence. A simple illustration of this fact is provided for instance by the well-known example of P\'olya's urn \cite{polya:30}.

In order to prove our main result, we will also make use of a generalisation (to conditional probability distributions) of Diaconis and Freedman's finite de~Finetti theorem \cite{diaconis:80}, which is due to Christandl and Toner \cite{christandl:09}. In the present setting, this theorem establishes that any symmetric behaviour of the preparation devices which satisfies the subsystem condition may be \emph{approximated} by a description which is independent up to classical correlations \footnote{In the setting originally envisioned by Christandl \& Toner, that of outcome statistics in measurement scenarios, the subsystem condition (\ref{eq:nopresig}) would correspond to the \emph{no-signalling} condition, which similarly requires the existence of well-defined marginal probabilities.}.
We let $\| \mu - \nu \|$ denote the trace distance between any two (conditional) probability distributions $\mu, \nu$ on the same measurable space \cite{christandl:09}, and use this to quantify the approximation.

\begin{theorem}\label{thm:chr}
(Christandl and Toner \cite{christandl:09}). Suppose that $\sigma$ is a symmetric $n$-partite conditional probability distribution on
\begin{equation}
(\Lambda^{n} \times P^{n}, \mathcal{L}^{n} \times \Sigma^{n})
\end{equation}
which satisfies the subsystem condition, and can thus be marginalised to a unique $m$-partite conditional probability distribution $\mu$
on
\begin{equation}
(\Lambda^{m} \times P^{m}, \mathcal{L}^{m} \times \Sigma^{m}),
\end{equation}
for $m < n$.
There exists a probability distribution $\mu_c$ on a finite set of single-subsystem conditional probability distributions labelled $\Lambda_c$, such that
\begin{multline}\label{eq:tradis}
\left\| \mu - \sum_{\omega \in \Lambda_c} \mu_c(\omega) \mu_\omega^{m} \right\| \leq \\ \min \left( \frac{2m |P| |\Lambda|^{|P|}}{n} , \frac{|P| m (m-1)}{n} \right).
\end{multline}
\end{theorem}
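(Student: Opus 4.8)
The plan is to reduce this conditional (``no-signalling box'') de~Finetti statement to the classical finite de~Finetti theorem of Diaconis \& Freedman for exchangeable sequences, using the subsystem condition as the structural ingredient that plays the role exchangeability alone plays in the classical case. The guiding observation is that a single-subsystem conditional distribution is just a stochastic map $P \to \Lambda$, and the set of such maps is a polytope whose vertices are the $|\Lambda|^{|P|}$ deterministic response functions $f : P \to \Lambda$; this is the origin both of the finite label set $\Lambda_c$ and of the factor $|\Lambda|^{|P|}$ in the first bound. Symmetry of $\sigma$ means that, for any fixed input string, the induced distribution over output strings is exchangeable, while the subsystem condition guarantees that all marginals are well defined and mutually consistent, so that the $m$-partite marginal $\mu$ genuinely exists and is independent of the inputs supplied to the remaining $n-m$ subsystems.

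First I would isolate $m$ ``system'' subsystems and treat the remaining $n-m$ as ``spectators''. Assigning the spectators inputs and recording their outputs produces an empirical single-subsystem conditional distribution $\hat{Q}(\lambda \mid p)$, namely the frequency with which spectators receiving input $p$ return output $\lambda$; I would take $\mu_c$ to be the law of $\hat{Q}$, so that $\sum_{\omega} \mu_c(\omega)\,\mu_\omega^{\times m}$ is precisely the distribution obtained by drawing the $m$ system outputs independently from $\hat{Q}$. Because the subsystem condition is exactly a no-signalling condition, conditioning on, or marginalising over, the spectator data leaves the $m$-partite marginal unchanged, so this conditioning is legitimate. The heart of the argument is then to show that, conditioned on $\hat{Q}$, the true $m$-partite box is close in trace distance to the product $\hat{Q}^{\times m}$.

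This final step is a sampling-without-replacement versus sampling-with-replacement comparison, lifted from the value-alphabet $\Lambda$ to the conditional setting. I would couple the two processes so that they agree until the first ``collision'' among the $m$ draws from the population of $n$ subsystems; the probability of a collision within $m$ draws is of order $m(m-1)/n$, and handling the $|P|$ distinct input values contributes the factor $|P|$, giving the second bound $|P|\,m(m-1)/n$. The first bound comes from the complementary strategy of controlling the finite-sample fluctuation of $\hat{Q}$ directly: certifying product form over \emph{all} input strings simultaneously amounts to pinning down a distribution over the $|\Lambda|^{|P|}$ deterministic strategies, and an $L^1$/union-bound estimate over these $|P|\,|\Lambda|^{|P|}$ parameters, incurred once per system subsystem, yields $2m|P||\Lambda|^{|P|}/n$. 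Taking whichever is smaller gives the stated minimum.

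The main obstacle I anticipate is the conditional structure itself: unlike the classical theorem, where each party carries a value in a fixed finite set, here each subsystem reveals only its response to the single input it is actually given, so no spectator ever exhibits its full conditional box. Making the ``approximately product given $\hat{Q}$'' step rigorous---and in particular ensuring that one and the same mixing measure $\mu_c$ works uniformly across every input string $(p_1,\dots,p_m)$, rather than a different mixture for each---is exactly where the no-signalling content of the subsystem condition must be used in full, and is the step I expect to demand the most care.
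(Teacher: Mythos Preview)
The paper does not prove this theorem at all: it is quoted as a result of Christandl \& Toner \cite{christandl:09} and is used as a black box in the proof of Theorem~\ref{thm:nopresig}. There is therefore no ``paper's own proof'' to compare against.

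That said, your sketch is a reasonable outline of how the Christandl--Toner argument actually goes, and the intuitions are sound: the spectator subsystems supply an empirical single-party box, the subsystem (no-signalling) condition is what makes the marginalisations and the conditioning on spectator data consistent, and the two terms in the minimum do arise from the two complementary estimates you describe (a Diaconis--Freedman-style sampling-with/without-replacement coupling for the $|P|m(m-1)/n$ term, and a fluctuation bound over the $|\Lambda|^{|P|}$ deterministic strategies for the other). You also correctly flag the genuinely delicate point, namely that a single mixing measure must work for all input strings simultaneously; this is precisely where Christandl and Toner's argument goes beyond the classical de~Finetti theorem and where the no-signalling hypothesis is essential. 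For the purposes of this paper, however, none of that needs to be reproduced: the theorem is invoked, not proved.
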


We are now in a position to state and prove a $\psi$-ontology result which holds under the minimal notion of independence.

\begin{theorem}\label{thm:nopresig}
Suppose there exists an $m$-partite $\{\psi,\phi\}$ conclusive exclusion measurement, and suppose, moreover, that the $m$ subsystems on which the conclusive exclusion measurement is to be performed are uniformly sampled from a larger ensemble of $n>m$ such subsystems. For any symmetric ontological theory satisfying the subsystem condition and describing this experiment, the epistemic overlap of $\psi$ and $\phi$ is subject to the bound
\begin{equation}\label{eq:bound}
\omega(\psi,\phi) \leq \min \left( \frac{4 m |\Lambda|^{2}}{n} , \frac{2 m (m-1)}{n} \right).
\end{equation}

\end{theorem}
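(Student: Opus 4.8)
The plan is to use the finite de~Finetti theorem (Theorem~\ref{thm:chr}) to reduce the subsystem-condition scenario to an \emph{approximate} classical-correlation scenario, in which Proposition~\ref{prop:compas} already delivers vanishing overlap, and then to pay for the approximation with the trace-distance error. First I would check the hypotheses of Theorem~\ref{thm:chr}: the symmetric $n$-partite ontological description $\sigma$ of the full ensemble satisfies the subsystem condition \eqref{eq:nopresig} by assumption, so it marginalises to a well-defined $m$-partite conditional distribution $\mu$ describing the sampled subsystems, and the preparation set is $P = \{\psi,\phi\}$, so $|P| = 2$. Substituting $|P| = 2$ into \eqref{eq:tradis} produces a distribution $\mu_c$ on a finite label set $\Lambda_c$ of single-subsystem conditional distributions $\mu_\omega$ with
\[
\Bigl\| \mu - \sum_{\omega \in \Lambda_c} \mu_c(\omega)\, \mu_\omega^{\times m} \Bigr\| \leq \min\Bigl( \frac{4 m |\Lambda|^{2}}{n}, \frac{2 m (m-1)}{n} \Bigr),
\]
which is precisely the right-hand side of \eqref{eq:bound}. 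Thus $\mu$ lies within trace distance $\epsilon$ (the quantity above) of the mixture-of-products distribution $\bar\mu := \sum_\omega \mu_c(\omega)\,\mu_\omega^{\times m}$.

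The second step is to recognise $\bar\mu$ as an object of exactly the classical-correlation type \eqref{eq:compas}, with the de~Finetti label $\omega \in \Lambda_c$ playing the role of the common-past variable: conditioned on $\omega$ the $m$ subsystems are independent and identically distributed according to $\mu_\omega$. Were $\bar\mu$ to reproduce the conclusive exclusion statistics exactly, the argument of Proposition~\ref{prop:compas}---namely, PBR (Theorem~\ref{thm:pbr}) applied with the mediator held fixed---would force each component $\mu_\omega(\,\cdot \mid \psi)$ and $\mu_\omega(\,\cdot \mid \phi)$ to have non-overlapping supports, so that the overlap vanishes. The task is therefore to run this argument in a stable, quantitative form: since $\mu$ reproduces the conclusive exclusion statistics exactly and lies within $\epsilon$ of $\bar\mu$, every outcome probability computed from $\bar\mu$ differs from its quantum value by at most $\epsilon$, and I would propagate this $\epsilon$ through the exclusion and completeness relations to bound $\omega(\psi,\phi)$ as defined in \eqref{eq:q}.

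I expect the stability step to be the main obstacle. A naive per-component version of PBR only controls the probability that all $m$ sampled subsystems simultaneously land in a single component's overlap region---a quantity of order $(1-D(\mu_\omega(\,\cdot \mid \psi),\mu_\omega(\,\cdot \mid \phi)))^{m}$, i.e.~exponentially small in $m$. Bounding \emph{that} by $\epsilon$ yields only the far weaker estimate $\omega \lesssim \epsilon^{1/m}$, and moreover controls the extended, mediator-inclusive notion of overlap rather than the single-subsystem overlap $1 - D(\mu_\psi,\mu_\phi)$ appearing in \eqref{eq:bound}. Indeed, one can write down a mixture of products that satisfies the exclusion \emph{zeros} exactly while still exhibiting a large single-subsystem overlap, so the zeros alone cannot suffice. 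Obtaining the advertised bound, which is linear in $\epsilon$ and matches \eqref{eq:tradis} with no loss, will require exploiting the full quantum-mechanically predicted statistics of the conclusive exclusion measurement---its nonzero entries as well as its zeros---together with the constraint that the ontological marginals $\mu_\psi,\mu_\phi$ reproduce all single-subsystem quantum statistics. Closing the gap between the exact disjointness furnished by Proposition~\ref{prop:compas} and a sharp, linear-in-$\epsilon$ control of $1 - D(\mu_\psi,\mu_\phi)$ is where the real work lies.
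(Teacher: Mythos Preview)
Your first two steps---invoking Theorem~\ref{thm:chr} with $|P|=2$ to get a mixture-of-products approximant $\bar\mu$ within trace distance $\epsilon$ of the true $m$-partite marginal $\mu$, and recognising $\bar\mu$ as a classical-correlation model with the de~Finetti label playing the role of $\lambda_c$---coincide exactly with the paper's proof. Where you diverge is in the third step: you anticipate a delicate stability argument, propagating $\epsilon$ through an approximate version of conclusive exclusion, and you correctly note that a naive run of PBR on each de~Finetti component would only yield $\omega \lesssim \epsilon^{1/m}$.

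The paper does not do any of that. It simply \emph{asserts} that Proposition~\ref{prop:compas} applies to $\bar\mu$ (there called $\nu$) and concludes $\nu_p(\Delta)=0$, where $\Delta$ is the overlap region of the actual single-subsystem marginals $\mu_\psi,\mu_\phi$. The bound \eqref{eq:bound} then follows in one line:
\[
\omega(\psi,\phi)\le \mu_p(\Delta)=|\mu_p(\Delta)-\nu_p(\Delta)|\le \|\mu-\nu\|\le \epsilon.
\]
So the paper's route is far shorter than you expect, but it short-circuits precisely at the two points you flagged as obstacles: (i) Proposition~\ref{prop:compas} has as a hypothesis that the model reproduces the conclusive exclusion statistics \emph{exactly}, whereas $\nu$ only does so up to $\epsilon$; and (ii) Proposition~\ref{prop:compas} yields disjoint supports on the \emph{extended} space $\Lambda\times\Lambda_c$, not $\nu_p(\Delta)=0$ for a set $\Delta\subseteq\Lambda$ defined via $\mu$'s marginals. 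Your instinct that these steps need justification is sound; the paper's proof does not supply it. If you are content to follow the paper, the argument is the four-line chain above; if you want a proof that actually closes the gaps you identified, additional work beyond what the paper presents is indeed required.
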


\begin{proof}
Let $\Delta$ be the measurable set of ontic states witnessing the epistemic overlap of $\mu_\psi,\mu_\phi$. Then
\begin{equation}\label{eq:pr1}
\begin{aligned}
\omega(\psi,\phi) &= \min_{p \in \{ \psi, \phi \}} \mu_p(\Delta) \\
&= \min_{p \in \{ \psi, \phi \}} \pi \cdot \mu(\Delta,\Lambda, \dots, \Lambda \mid p, P, \dots , P),
\end{aligned}
\end{equation}
for all permutations $\pi \in S^m$. The first line follows from a re-statement of Eq.~\eqref{eq:q}, and the second follows from the symmetry assumption together with the definition of marginalisation to a single subsystem.
By Theorem \ref{thm:chr}, we know that there exists a probability distribution $\nu = \sum_{\omega \in \Lambda_c} \mu_c(\omega) \mu_\omega^{m}$ satisfying (\ref{eq:tradis}). Since $\nu$ describes preparation correlations which are independent up to conditioning on $\omega \in \Lambda_c$, Proposition \ref{prop:compas} tells us that
\begin{equation}\label{eq:nuzer}
\nu_p(\Delta) = 0,
\end{equation}
for all $p \in \{ \psi, \phi \}$. Then
\begin{equation}
\begin{aligned}
\omega(\psi,\phi) \leq& \; \mu_p(\Delta) \\
=& \left| \mu_p(\Delta) - \nu_p(\Delta) \right| \\
=& \left| \mu(\Delta,\Lambda, \dots, \Lambda \mid p, P, \dots , P)
\right. \\
&\left.
- \nu(\Delta,\Lambda, \dots, \Lambda \mid p, P, \dots , P) \right| \\
\leq& \left\| \mu - \nu \right\| \\
\leq& \min \left( \frac{4 m |\Lambda|^{2}}{n} , \frac{2 m (m-1)}{n} \right) ,
\end{aligned}
\end{equation}
where the first line again follows from the definition of $\omega$ [Eq.~\eqref{eq:q}], the second from Eq.~(\ref{eq:nuzer}), the third from Eq.~\eqref{eq:pr1}, the fourth from the definition of trace distance, and the final line from Eq.~(\ref{eq:tradis}) and the fact that the cardinality of the set of preparations is $|P| = 2$.
\end{proof}

\section{Conclusion}

All no-go theorems rest upon certain assumptions, the justifications for which should in each case be carefully considered \footnote{From a foundational perspective, even the justifications for the common ontological assumptions might be questioned (e.g.~\cite{fuchs:14}), though in any case the no-go results would remain of importance in placing constraints on how quantum predictions may be simulated.}. In particular, the PBR theorem relies on the assumption of preparation independence, which we have argued here may be too strong. Indeed, it does not allow for any correlations in the global ontic state of a multipartite system, though in light of the theorems of Bell and Kochen and Specker it is known that any ontological theory that accounts for the predictions of quantum mechanics will necessarily have such correlations at the ontological level.

We have seen that it is still possible to obtain $\psi$-ontology results by assuming less restrictive notions of independence, which to varying degrees do allow for correlations in the global ontic state, and which can be more strongly justified than preparation independence. Proposition \ref{prop:compas} allows for independence up to classical correlations, and Theorem \ref{thm:nopresig} applies more generally for any correlations consistent with the the very minimal notion of independence, the subsystem condition \eqref{eq:nopresig}. At the same time it must also be noted that these results inevitably introduce other assumptions (the integral-past and symmetry assumptions), the implications of which remain to be further explored, and whose justifications should equally treated with due circumspection.


From a theoretical perspective, Theorem \ref{thm:nopresig} rules out arbitrarily small $\psi$-epistemic overlaps, since quantum theory can perfectly well describe ensembles of $n$ subsystems equipped with $\{ \psi, \phi \}$ preparation devices for arbitrarily large $n$. Therefore, while it is always possible to contrive $\psi$-epistemic toy models for conclusive exclusion experiments by exploiting correlations in the global ontic state, as demonstrated explicitly for the PBR experiment in Sec.~\ref{sec:pbrbreaks}, any full-blown ontological theory which admits arbitrary composition of (sub)systems and respects our symmetry and integral-past assumptions is necessarily $\psi$-ontic \footnote{The key role of compositionality here is reminiscent of approaches to quantum theory in which composition is treated as a primitive \cite{abramsky:09categorical,coecke:10b}.}. 

From an experimental perspective, of course, it is still meaningful and of importance to test whether it is in fact the quantum-mechanical predictions or those consistent with, in this case, certain $\psi$-epistemic ontological theories that are borne out by observation, and Theorem \ref{thm:nopresig} suggests a means of placing experimentally established bounds on epistemic overlaps.


\section*{Acknowledgements}

The author thanks Samson Abramsky, John-Mark Allen, Jon Barrett, Roger Colbeck, Lucien Hardy, Matty Hoban, Dominic Horsman, Matt Leifer, Matt Pusey, Rui Soares Barbosa, an anonymous referee, and especially Adam Brandenburger, for valuable comments and discussions at various stages of this work, as well as audiences at the Besan\c{c}on \'Epiphymaths and Oxford Philosophy of Physics seminars. This research was carried out both at l'Institut de Recherche en Informatique Fondamentale, Universit\'{e} Paris Diderot - Paris 7, and at the Department of Computer Science, Oxford University. Financial support is gratefully acknowledged from the Fondation Sciences Math\'{e}matiques de Paris, postdoctoral research grant eotpFIELD15RPOMT-FSMP1, Contextual Semantics for Quantum Theory, the Networked Quantum Information Technologies (NQIT) hub, and the Oxford Martin School, University of Oxford via the programme for Bio-inspired Quantum Technologies.




\appendix

\section{$\psi$-ontology in relation to nonlocality and contextuality}

A natural question to ask is how the notion of $\psi$ ontology and the no-go results discussed in this article relate to the notions of nonlocality and contextuality as established by the no-go theorems of, e.g., Bell and Kochen and Specker. A first observation is that all of these notions are couched in terms of ontological theories: the theorems all suppose the ontological assumptions of Sec.~\ref{sec:ass} and identify differing operational predictions between certain classes of ontological theories ($\psi$-epistemic, local, noncontextual, or subclasses of these) and quantum theory.

In \cite{mansfield:14a}, we posed the question of whether objects other than the quantum state might give rise to epistemic overlaps on space of ontic states. For instance, if we accept the ontological assumptions then the ontic state is statistically sufficient for determining the outcome probabilities for any given measurement on the system. So for each measurement the theory determines a family of probability distributions on a set $O$ of outcomes conditional on the ontic state $\lambda \in \Lambda$. From this, by Bayesian inversion, we can also obtain a family of probability distributions on the set of ontic states $\Lambda$ conditional on the outcome $o \in O$, provided that the ontological theory is parameter independent or non-contextual \footnote{As discussed in \cite{mansfield:14a}, a necessary condition for performing the inversion is parameter independence/noncontextuality: the requirement that for all measurements and for all $\lambda \in \Lambda$ the probability distribution over outcomes is independent of which measurements may or may not be made jointly or in context with it. Note that noncontextuality is the more general notion, subsuming parameter independence in the case of a multipartite system in which a context is simply a set of measurements with one for each party.}.
Therefore, we can also meaningfully ask whether measurement outcomes can give rise to epistemic overlaps \footnote{See also \cite{allen:16} for a related analysis of superpositions.}.

Outcome-ontic theories in this sense are precisely the \emph{deterministic, non-contextual} ontological theories; determinism being the requirement that for all measurements and all $\lambda \in \Lambda$ the corresponding distribution over outcomes is a $\delta$-function. The failure of an empirical model (i.e.,a probability table representing either empirical data or theoretical predictions for empirical data)
to be realisable by a deterministic, noncontextual ontological model is precisely an instance of \emph{contextuality}. Equivalently \cite{abramsky:11}, contextual empirical models may be characterised as those which fail to have realisation by a factorisable ontological model, and thus contextuality generalises \emph{nonlocality} (a term which applies in the special case of a multipartite measurement scenario).

Any instance of nonlocality or contextuality in a quantum mechanically realisable empirical model, therefore, provides a contradiction between the predictions of quantum mechanics and those of outcome-ontic ontological theories. The theorems of Bell, Kochen and Specker, etc., which identify such empirical models, can thus be re-interpreted as no-go theorems which rule out outcome-ontology. Note the apparent duality here: while $\psi$-ontology theorems rule out certain $\psi$-\emph{epistemic} theories, nonlocality or contextuality theorems rule out certain outcome-\emph{ontic} theories.


\bibliographystyle{unsrt}
\bibliography{refs1}

\end{document}